\documentclass[11pt]{article}
\usepackage[margin=1in]{geometry}
\usepackage[T1]{fontenc}
\usepackage{lmodern,amsmath,amssymb,amsthm,mathtools,enumitem,tabularray,needspace}
\usepackage[dvipsnames]{xcolor}
\definecolor{xrcomment}{HTML}{D64A8C}

\usepackage[colorlinks=true,linkcolor=magenta!70!black,citecolor=blue!60!black,urlcolor=blue!60!black]{hyperref}
\usepackage[nameinlink,noabbrev]{cleveref}
\makeatletter
\newcommand{\citegroup}[1]{%
  \begingroup
  \def\@cite##1##2{##1\if@tempswa, ##2\fi}%
  [#1]%
  \endgroup
}
\makeatother
\hypersetup{pdftitle={Almost Optimal FPT Inapproximability for k-SetCover},
 pdfauthor={Venkatesan Guruswami and Xuandi Ren},
 pdfsubject={Parameterized inapproximability of Set Cover}}
\setlist[enumerate]{itemsep=3pt,topsep=4pt,leftmargin=2em,label=(\roman*)}
\newtheorem{theorem}{Theorem}[section]
\newtheorem{lemma}[theorem]{Lemma}

\theoremstyle{definition}
\newtheorem{openq}{Open Question}
\newtheorem{definition}[theorem]{Definition}
\theoremstyle{remark}

\newcommand{\OPT}{\operatorname{OPT}}
\newcommand{\fpt}{\textnormal{\textsf{FPT}}}
\newcommand{\wone}{\textnormal{\textsf{W[1]}}}
\newcommand{\wtwo}{\textnormal{\textsf{W[2]}}}
\newcommand{\hypoeth}{\textnormal{\textsf{ETH}}}
\newcommand{\hyposeth}{\textnormal{\textsf{SETH}}}
\newcommand{\hypogapeth}{\textnormal{\textsf{Gap-ETH}}}
\DeclareRobustCommand{\Clique}{\textnormal{\textsc{Clique}}}
\DeclareRobustCommand{\SetCover}{\textnormal{\textsc{SetCover}}}
\DeclareRobustCommand{\MMSA}{\textnormal{\textsc{MMSA}}}
\DeclareRobustCommand{\CSP}{\textnormal{\textsc{CSP}}}
\DeclareRobustCommand{\SAT}{\textnormal{\textsc{SAT}}}
\title{Almost Optimal FPT Inapproximability for $k$-\SetCover{}\thanks{Work supported in part by NSF grant CCF-2211972 and DOD Advanced
Research Projects Agency grant HR0011262E031.}\\[0.8em]
{\normalsize Reported by}}
\author{Venkatesan Guruswami\\[3pt]
{\small\texttt{venkatg@berkeley.edu}}\\[3pt]
{\small Departments of EECS \& Mathematics}\\
{\small UC Berkeley}
\and Xuandi Ren\\[3pt]
{\small\texttt{xuandi\_ren@berkeley.edu}}\\[3pt]
{\small Department of EECS}\\
{\small UC Berkeley}}
\date{}
\usepackage{todonotes}
\begin{document}
\hypersetup{pageanchor=false}
\begin{titlepage}
\maketitle
\thispagestyle{empty}
\vspace{8mm}
\begin{abstract}
We show that $\bigl(\frac{\log n}{\log\log n}\bigr)$-approximate parameterized $k$-\SetCover{} is $\wone$-hard, and has no $n^{o(k/\log k)}$-time algorithms under \hypoeth{}. This improves upon the previous best factors $\bigl(\frac{\log n}{\log\log n}\bigr)^{1/k}$ in (Lin, 2019)
and $(\log n)^{1/\operatorname{poly}(k)}$ in (Karthik, Laekhanukit, and Manurangsi, 2019). Here $k$ is the yes-case guarantee and $n$ is the number of candidate sets.
While the best approximation ratio is still $O(\log n)$ via the greedy algorithm, closing this $1/k$ gap in the exponent has been a longstanding open
problem; we remove this loss via a simple direct
reduction.

\smallskip 
The construction is self-contained and does not rely on the parameterized inapproximability hypothesis (PIH). Starting with sparse parameterized
2-\CSP{} instances (Karthik, Marx, Pilipczuk, and Souza, 2024), we build a
monotone CNF formula, which is equivalent to a \SetCover{} instance.
To obtain a $k$-versus-$h$ gap, the reduction enumerates all hash functions
from $\Sigma$ to $[2h]$ and all unsatisfiable 2-\CSP{} instances on the same
constraint graph with alphabet $[2h]$.
For each such instance, it asks for a certificate that the hashed label pairs are not all contained in that instance. Perfect hashing makes this enumeration efficient
for $h=\log n/\log\log n$.
\end{abstract}

\vspace{3mm}
\begin{quotation}\small
\noindent\textbf{AI Disclosure.}
The proof was discovered by GPT-6 Astra in response to the authors' prompts about the parameterized Minimum Monotone Satisfying Assignment (MMSA) problem. The authors checked the proofs and wrote the paper; they take responsibility for the correctness of the results but claim no intellectual credit for them. 
\end{quotation}
\end{titlepage}
\hypersetup{pageanchor=true}
\setcounter{page}{1}
\section{Introduction}\label{sec:intro}

\SetCover{} is a fundamental problem in combinatorial optimization. It is one of Karp's first 21 NP-complete problems~\cite{karp72}.
Given a collection of sets over a universe $U$, the goal is to find a
smallest subcollection whose union is $U$. 
The simple formulation and the tight connection to other combinatorial problems have
made \SetCover{} a central example for studying hardness of approximation. The polynomial-time approximability of \SetCover{} is by now well
understood. The greedy algorithm achieves a
$(1+\ln |U|)$-approximation~\cite{johnson,chvatal}. On the hardness
side, Feige~\cite{feige98} ruled out
$(1-\varepsilon)\ln |U|$-approximation in polynomial time for every fixed
$\varepsilon\in(0,1)$, assuming $\mathsf{NP}$ is not in quasi-polynomial time.
Later works established NP-hardness for the same approximation ratio~\cite{mos15,ds14}.
Thus the logarithmic approximation guarantee
is essentially optimal for polynomial-time algorithms, unless
$\mathsf{P}=\mathsf{NP}$.

Parameterized complexity offers a complementary approach to $\mathsf{NP}$-hard problems by identifying a parameter $k$ and seeking algorithms whose running time depends arbitrarily on $k$ but only polynomially on the input size $n$.  An algorithm is said to be fixed-parameter tractable
(FPT) if it runs in time $f(k)n^{O(1)}$, where $f$ can be any computable function. Parameterized by the desired number of sets, $k$-\SetCover{}
is a canonical $\wtwo$-complete problem~\cite{df99}, so it has no FPT
algorithm unless $\wtwo=\fpt$. 
It is then natural to ask how good an approximation algorithm can be when FPT running time is allowed.

A long line of work has pursued this direction. Chen and Lin~\cite{cl16} ruled out any constant-factor FPT approximation for $k$-\SetCover{} under $\wone\neq\fpt$, and a
$(\log k)^{1/4-\varepsilon}$-factor FPT approximation for every fixed $\varepsilon\in(0,1/4)$ under \hypoeth{}.
Chalermsook, Cygan, Kortsarz, Laekhanukit, Manurangsi, Nanongkai, and
Trevisan~\cite{cck17} then established \textit{total FPT inapproximability} for $k$-\SetCover{} by ruling out any $g(k)$-factor approximation even in $f(k)n^{o(k)}$ time under \hypogapeth{}. Karthik, Laekhanukit, and Manurangsi~\cite{klm19} and Lin~\cite{lin19} used two different approaches to weaken the \hypogapeth{} assumption to gap-free hypotheses, giving $(\log n)^{o(1)}$-factor\footnote{The factor stated in~\cite[Theorem~1.3]{klm19} is $(\log n)^{1/\operatorname{poly}(k)}$. When only FPT running time is required, padding strengthens this to $(\log n)^{\alpha(k)}$ for every computable positive function $\alpha(k)=o(1)$; see~\cite[Lemma~2.8]{glrz}.} FPT inapproximability under $\wone \ne \fpt$, and $(\log n)^{1/\operatorname{poly}(k)}$ and $\left(\log n/\log\log n\right)^{1/k}$ factors, respectively, in $f(k)n^{o(k)}$ time under \hypoeth{}. Lin, Ren, Sun, and Wang~\cite{lrsw23} later showed the $\wtwo$-hardness of approximating $k$-\SetCover{} within any constant factor. While these hardness results seem satisfactory, we remark that the best FPT approximation algorithm is still the classical greedy one, which is an $O(\log n)$-approximation. There is still a gap in the exponent between the algorithm and the hardness results: $1$ versus $o(1)$ under $\wone \ne\fpt$ and $1$ versus $1/k$ under \hypoeth{}. Closing this gap has been listed as an important open problem in the literature~\citegroup{\cite[Section~4]{lin19};
\cite[Open Question~4]{fklm20};
\cite[Question~2]{lrsw23}}.

\begin{openq}[{\cite[Open Question~4]{fklm20}}]
Is there a $(\log n)^{1-o(1)}$ factor approximation algorithm for
$k$-\SetCover{} running in time $n^{k-0.1}$?
\end{openq}

\paragraph{Our results.}
In this paper, we remove the aforementioned gap in the exponent and establish FPT inapproximability for $k$-\SetCover{} with a factor of $\frac{\log n}{\log\log n}$. Our reduction gives
\wone{}-hardness and, under \hypoeth{}, excludes
$f(k)n^{o(k/\log k)}$-time algorithms for every computable function
$f$. See \Cref{thm:main,thm:main-eth} below and \Cref{tab:comparison} for a comparison between our results and the previous state of the art.

\Needspace{8\baselineskip}
\begin{theorem}\label{thm:main}
Assuming $\wone\neq\fpt$, there is no FPT algorithm that approximates
$k$-\SetCover{} within a $\frac{\log n}{\log\log n}$ factor.
\end{theorem}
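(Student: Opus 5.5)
The plan is to give an FPT reduction from parameterized 2-\CSP{} (equivalently multicolored \Clique{}), which is \wone{}-hard, to a $k'$-versus-$h$ gap version of \SetCover{}. Here $k'$ depends only on the parameter and $h=\log n/\log\log n$. Following the abstract, I would start from a sparse parameterized 2-\CSP{} instance $G=(V,E,\Sigma,\{\Pi_e\})$ with $|V|=k$, $|E|=O(k)$ and $|\Sigma|\le n_0$, as provided by Karthik, Marx, Pilipczuk, and Souza. The work is done in the language of a monotone CNF. Variables are the pairs $(v,\sigma)$, or the edge-labels $(e,(\sigma,\tau))\in\Pi_e$. Clauses become universe elements, and a variable covers exactly the clauses containing it. A satisfying assignment of weight $w$ is then a set cover of size $w$.

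The construction has two families of clauses. The first family asks for at least one label per edge, giving $|E|$ "selection" clauses. The second family is the "refutation" clauses. For every hash function $\phi:\Sigma\to[2h]$ and every unsatisfiable 2-\CSP{} instance $\Pi'$ on the same graph $G$ with alphabet $[2h]$, I would add gadget clauses. These demand a certificate that the hashed chosen labels are not all contained in $\Pi'$: the certificate is one chosen edge-label $(\sigma,\tau)$ on some edge $e$ with $(\phi(\sigma),\phi(\tau))\notin\Pi'_e$. More concretely, for each edge $e$ with variables $(e,(\sigma,\tau))$, an edge label is "good for $(\phi,\Pi')$" if its hash lies outside $\Pi'_e$. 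The clause for $(\phi,\Pi')$ is the disjunction, over $e$, of these good labels. Since this is a disjunction of variables, it is monotone.

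The completeness case: if $G$ is satisfiable via $\alpha$, choose the $|E|$ labels $(\alpha(u),\alpha(v))$. For any $\phi$, the hashed assignment $\phi\circ\alpha$ satisfies the hashed instance, so it cannot be fully inside an unsatisfiable $\Pi'$. Some edge therefore witnesses a violation, and every refutation clause is satisfied. This gives a cover of size $k'=|E|=O(k)$. One subtlety: the labels must be consistent on shared vertices. I would enforce this by using vertex variables plus edge variables and consistency via the hashing clauses, or alternatively by accepting inconsistent label choices and letting $\Pi'$ range over all instances. I would pick the edge-label formulation and argue soundness on the chosen multiset.

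The soundness case: suppose a set $S$ of at most $h$ edge-labels satisfies all clauses. The labels appearing in $S$ involve at most $2h$ symbols of $\Sigma$. By perfect hashing, some $\phi$ in the enumerated family is injective on them. Define $\Pi'_e$ to be exactly the set of hashed pairs of chosen labels on $e$. If the original instance is unsatisfiable, I need $\Pi'$ to be unsatisfiable. Suppose instead that $\Pi'$ had a satisfying assignment $\beta$. Injectivity lets us pull $\beta$ back to a labeling of the vertices by original symbols, and each pulled-back edge pair lies in $S$ and is therefore a genuine constraint pair, so $G$ would be satisfiable. Hence $\Pi'$ is unsatisfiable. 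But then every chosen label has its hash inside $\Pi'_e$, so the clause for $(\phi,\Pi')$ is violated, a contradiction. This shows $\OPT>h$.

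The main obstacle is size. There are $(2h)^{2h\cdot|E|}$ candidate instances $\Pi'$, i.e. $2^{O(k h^2\log h)}$, and we need this to be $f(k)\cdot\mathrm{poly}(n)$. I would therefore restrict $\Pi'$ to have at most $h$ allowed pairs in total, since only those arise in soundness. This gives roughly $\binom{|E|(2h)^2}{h}\le (kh)^{O(h)}$ instances. With $h=\log n/\log\log n$, this is $n^{O(1)}\cdot 2^{O(h\log k)}$, which is FPT. The perfect hash family has size $2^{O(h)}\log n_0$. Choosing $h=\log N/\log\log N$ relative to the final instance size $N$ keeps everything polynomial, and the ratio $h/k'$ can be absorbed by padding in $k$. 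The result is the stated gap, and it would follow that an FPT $\frac{\log n}{\log\log n}$-approximation decides 2-\CSP{}, so $\wone=\fpt$.
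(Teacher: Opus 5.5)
Your proposal is correct and is essentially the paper's proof: the same edge-label variables $x_{i,a,b}$, a perfect hash family into $[2h]$, enumeration of unsatisfiable templates with at most $h$ allowed pairs in total, and clauses demanding a selected pair whose hash falls outside the template, with completeness via $\phi\circ\alpha$ and soundness via the template of hashed selected pairs pulled back through an injective $\phi$. The differences are cosmetic. Your selection clauses are redundant. You bound the template count by $(kh)^{O(h)}$ and argue it is FPT, where the paper uses $h\ge m$ to get $2^{O(h\log h)}$ and also accounts for the $2^{O(h)}$ cost of testing each template for unsatisfiability, which you omit. Your ``padding'' remark is unneeded, since the paper states the result as the $(m,h)$-gap with $h=\lfloor\log N/\log\log N\rfloor$ and $N$ the number of candidate sets.
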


\begin{theorem}\label{thm:main-eth}
Assuming \hypoeth{}, for any computable function $f$, there is no
$f(k)n^{o(k/\log k)}$-time algorithm that approximates
$k$-\SetCover{} within a $\frac{\log n}{\log\log n}$ factor.
\end{theorem}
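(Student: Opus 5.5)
We prove both theorems with one FPT reduction. Its source is the sparse parameterized 2-\CSP{} of Karthik, Marx, Pilipczuk, and Souza: $k$ variables, alphabet $\Sigma$ with $|\Sigma|\le n$, and a constraint graph $G$ with $O(k)$ edges. This problem is \wone{}-hard and, under \hypoeth{}, has no $f(k)|\Sigma|^{o(k/\log k)}$-time algorithm. The target is a monotone CNF formula, which is the same thing as a \SetCover{} instance: the variables are the candidate sets and the clauses are the universe elements.

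\textbf{Construction.} Put $h=\log n/\log\log n$. For every vertex $v$ and label $a\in\Sigma$, and every edge $e=\{u,v\}$ with label pair $(a,b)\in R_e$, create a Boolean variable $x_{v,a}$ or $y_{e,a,b}$. Pick a perfect hash family $\mathcal H$ of functions $\Sigma\to[2h]$; since perfect hashing is injective on sets of size at most $2h$, we can take $|\mathcal H|=2^{O(h)}\log n$. For each $\eta\in\mathcal H$ and each 2-\CSP{} instance $\Psi$ on the same graph $G$ with alphabet $[2h]$ that is unsatisfiable, add clauses that require a certificate that the hashed chosen pairs are not all inside $\Psi$. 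Concretely, the clause for the triple $(\eta,\Psi,e)$ is
\[
\bigvee_{(a,b)\in R_e,\ (\eta(a),\eta(b))\notin\Psi_e} y_{e,a,b}.
\]
For each $\eta$ and $\Psi$ we want the conjunction over edges to be replaced by a disjunction, i.e.\ some edge must witness a violation. This OR-of-ORs flattens into a single monotone clause, so the number of clauses is at most $|\mathcal H|\cdot 2^{O(k)\cdot(2h)^2}$. We also add consistency clauses tying each $y_{e,a,b}$ to the vertex variables $x_{v,a}$. The clause count is then $n^{O(1)}$ times $2^{O(k h^2)}$. This exceeds $\mathrm{poly}(n)$ for $k$ up to $\log n$, so we rescale: set $h=\log N/\log\log N$ where $N$ is the final number of sets, and use the freedom in $k$ (padding $n$ to $n^{k}$) so that $2^{O(kh^2)}\le \mathrm{poly}(n)$. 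I expect this parameter balancing, together with keeping $\Theta(k/\log k)$ in the exponent, to be the delicate bookkeeping step.

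\textbf{Completeness.} Suppose the CSP has a satisfying assignment $\sigma$. Choose the $k$ variables $x_{v,\sigma(v)}$ together with the $O(k)$ variables $y_{e,\sigma(u),\sigma(v)}$, for a total of $O(k)$ sets. Fix any $\eta$ and any unsatisfiable $\Psi$. The hashed assignment $\eta\circ\sigma$ cannot satisfy $\Psi$, so some edge $e$ has $(\eta\sigma(u),\eta\sigma(v))\notin\Psi_e$. That edge covers the clause for $(\eta,\Psi)$.

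\textbf{Soundness.} This is the main obstacle. Suppose a cover $S$ has size at most $h$. Then $S$ touches at most $h$ labels per vertex, so there are at most $2h$ labels in total across the relevant pairs. By perfect hashing, some $\eta\in\mathcal H$ is injective on all labels used by $S$. Define $\Psi$ by letting $\Psi_e$ be the set of images $(\eta(a),\eta(b))$ of the pairs $y_{e,a,b}\in S$. With this $\Psi$, no edge can witness a violation, so the clause for $(\eta,\Psi)$ is uncovered. The cover requirement therefore forces $\Psi$ to be satisfiable. Pulling a satisfying assignment of $\Psi$ back through the injective $\eta$ gives labels $a_v$ whose pairs lie in $S\cap R_e$ on every edge, and these labels satisfy the original CSP.

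Thus the gap is $O(k)$ versus $h$, which becomes ratio $\log N/\log\log N$ after rescaling $k$ by a constant, for instance by padding. \Cref{thm:main} follows. For \Cref{thm:main-eth}, the number of sets is $N=\mathrm{poly}(n)$ and the parameter is $O(k)$, so an $N^{o(k/\log k)}$-time algorithm would contradict the KMPS lower bound.
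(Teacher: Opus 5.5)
Your architecture matches the paper's: hash the labels into $[2h]$, enumerate unsatisfiable 2-\CSP{} instances on the same constraint graph, and ask for a selected pair whose hashed image falls outside. Your completeness and soundness arguments are essentially the paper's. The genuine gap is the size of the construction.

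You enumerate \emph{all} unsatisfiable $\Psi=(\Psi_e)_{e\in E}$ with $\Psi_e\subseteq[2h]^2$. Already the families with $\Psi_{e_0}=\emptyset$ for one fixed edge $e_0$ number $2^{4h^2(|E|-1)}$, so you have $2^{\Omega(h^2)}$ clauses. For $h=\log N/\log\log N$ this is $N^{\Omega(\log N/(\log\log N)^2)}$, which is neither polynomial nor of the form $f(k)N^{O(1)}$. Padding cannot repair this: enlarging $N$ only increases $h^2/\log N$. A polynomial-size instance would force $h=O(\sqrt{\log N/k})$, i.e.\ only a roughly $\sqrt{\log N}$ gap. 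So the ``parameter balancing'' you defer is not bookkeeping; as you set it up, it cannot be done.

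The missing idea is sparsity of the templates. In your own soundness argument, the $\Psi$ built from a cover $S$ with $|S|\le h$ satisfies $\sum_e|\Psi_e|\le h$, because each selected $y_{e,a,b}$ contributes one pair. So it suffices to add clauses only for unsatisfiable templates with at most $h$ allowed pairs in total. Completeness is untouched, since that argument works for every unsatisfiable template.
\begin{itemize}
\item Once $h\ge|E|$, there are at most $(h+1)\bigl(|E|(2h)^2\bigr)^h=2^{O(h\log h)}$ such templates.
\item Each can be tested for satisfiability in $2^{O(h)}$ time by choosing one pair per edge, as $\prod_e|\Psi_e|\le 2^h$.
\item For $h=\lfloor\log N/\log\log N\rfloor$ we have $h\log h\le\log N$, so the instance is polynomial with an $(|E|,h)$ gap.
\end{itemize}

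Separately, drop the vertex variables $x_{v,a}$ and the ``consistency clauses.'' An implication $y_{e,a,b}\Rightarrow x_{v,a}$ is not a monotone clause, so it is not a \SetCover{} element. Soundness never needs it: injectivity of $\eta$ on the selected labels already makes the pulled-back assignment consistent. This requires every vertex to lie on some constraint, which is why the paper assumes a connected constraint graph.
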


We remark that the near-logarithmic FPT inapproximability for $k$-\SetCover{} was obtained
independently and almost simultaneously by Lin and Zheng~\cite{LZ26}, and Karthik~\cite{K26}, using potentially different reductions.
They also obtain tight running-time lower bounds under \hypoeth{}, \hyposeth{},
and the $k$-\textsc{SUM} hypothesis.

\begin{table}[htbp]
\centering
\small
\begin{tblr}{
 colspec={cccc},
 cells={c,m},
 rows={ht=2.6em},
 row{1}={font=\bfseries,ht=2.3em},
 colsep=5pt,rowsep=3pt,
 column{1}={leftsep=0pt},
 column{4}={rightsep=0pt},
 vline{2-4}={0.4pt},
 hline{2-Z}={0.4pt}
}
Assumption & Inapproximability Factor & Runtime Lower Bound & Reference\\
\SetCell[r=3]{c,m} $\wone\neq\fpt$
 & \SetCell[r=2]{c,m} $(\log n)^{o(1)}$
 & \SetCell[r=3]{c,m} $f(k)n^{O(1)}$
 & \cite[Theorem~4]{lin19}\\
 & & & \cite[Theorem~1.3]{klm19}\\
 & $\dfrac{\log n}{\log\log n}$ & & \Cref{thm:main}\\
\SetCell[r=3]{c,m} \hypoeth{}
 & $\left(\dfrac{\log n}{\log\log n}\right)^{1/k}$
 & \SetCell[r=2]{c,m} $f(k)n^{o(k)}$
 & \cite[Theorem~2]{lin19}\\
 & $(\log n)^{1/\operatorname{poly}(k)}$ & & \cite[Theorem~1.4]{klm19}\\
 & $\dfrac{\log n}{\log\log n}$ & $f(k)n^{o(k/\log k)}$
 & \Cref{thm:main-eth}\\
\end{tblr}
\caption{A summary of FPT inapproximability results for $k$-\SetCover{}.}
\label{tab:comparison}
\end{table}

\paragraph{Connection to minimum monotone satisfying assignment.}
\SetCover{} is equivalent to the depth-two Minimum Monotone Satisfying Assignment (\MMSA{}) problem~\cite{abmp01,ds04}, in which we are given a monotone CNF formula: each Boolean variable corresponds to a candidate set and each clause corresponds to an element of the universe. The goal is to minimize the number of Boolean variables set to $1$ while satisfying the formula.
This correspondence preserves the optimum. Parameterized
inapproximability of $k$-\MMSA{} and connections between variants of different depths were studied
in~\cite{mar13,glrz}. In the following, we use the above CNF view to describe our constructions for ease of presentation.

\paragraph{Proof overview.}
As a proof of concept, we present a direct reduction from $k$-\Clique{}, which suffices for the \wone{}-hardness. Let $G=(V,E)$ be a $k$-\Clique{} instance. Put
$m=\binom{k}{2}$, and fix an integer $h\ge m$.
Introduce one Boolean variable $x_{uv}$ for each edge $uv\in E$,
so the total number of Boolean variables is $n=|E|$.
An edge set $X\subseteq E$ specifies an assignment by setting
$x_{uv}=1$ whenever $uv\in X$.
Take a perfect hash family $\Phi$ from $V$ to $[2h]$:
for every $A\subseteq V$ of size at most $2h$, some
$\phi\in\Phi$ is injective on $A$.
Let $\mathcal H$ be the collection of all \emph{bad graphs}: simple graphs
on $[2h]$ with at most $h$ edges and no $k$-clique.

For each $\phi\in\Phi$ and
$H\in\mathcal H$, form the clause
\[
 \begin{gathered}
 C_{\phi,H}
 =\bigvee_{\substack{uv\in E\\\phi(u)=\phi(v)}}x_{uv}
 \vee
 \bigvee_{\substack{uv\in E\\\phi(u)\ne\phi(v)\\
 \{\phi(u),\phi(v)\}\notin E(H)}}x_{uv},\\
 C=\bigwedge_{\substack{\phi\in\Phi\\H\in\mathcal H}}C_{\phi,H}.
 \end{gathered}
\]
Each clause asks for a certificate: an edge in $X$ whose endpoints
either collide under $\phi$ or map to a nonedge of $H$.

For completeness, let $X$ consist of the $m$ edges of a $k$-clique in $G$.
Fix any $\phi\in\Phi$ and $H\in\mathcal H$.
If two clique vertices collide under $\phi$, the edge between them
satisfies the first disjunction. Otherwise, their images form a
$k$-clique. Since $H$ has no $k$-clique, some edge of $X$ maps to a
nonedge of $H$, satisfying the second disjunction.
Thus every clause is satisfied, and $\OPT(C)\le m$.

For soundness, suppose $G$ has no $k$-clique and take any edge set
$X\subseteq E$ with $|X|\le h$.
Let $A$ be the set of endpoints of edges in $X$, so $|A|\le2h$.
Choose $\phi\in\Phi$ injective on $A$, and let $H$ be the graph on
$[2h]$ with edge set
\[
 E(H)=\{\{\phi(u),\phi(v)\}:uv\in X\}.
\]
This is a simple graph with at most $h$ edges. By injectivity, any
$k$-clique in $H$ would lift to one in $G$, so $H\in\mathcal H$.
No edge in $X$ has colliding endpoints, and every edge in $X$ maps
into $E(H)$. Both disjunctions in $C_{\phi,H}$ are therefore false.
Thus $\OPT(C)>h$, giving an $(m,h)$-gap \SetCover{} instance.

There are at most $(h+1)\binom{2h}{2}^{h}=2^{O(h\log h)}$
graphs to consider. Each can be tested for a $k$-clique in $h^{O(k)}$
time by enumerating $k$-subsets of $[2h]$.
Together with the fact that the perfect hash family can be constructed
in time $2^{O(h)}|V|\log(2|V|)$, this gives total construction time
$(|V|+n)^{O(1)}2^{O(h\log h)}$.
Taking $h=\lfloor\log n/\log\log n\rfloor$ makes this polynomial.
This proves \Cref{thm:main} and rules out $f(k)n^{o(\sqrt k)}$-time
algorithms under \hypoeth{}. To avoid the quadratic parameter blow-up,
we instead reduce from sparse parameterized 2-\CSP{}, as described in
\Cref{sec:reduction}.

\section{Preliminaries}\label{sec:prelim}

\begin{definition}[\SetCover{}]
A \SetCover{} instance is a bipartite graph $I=(\mathcal S,U,E)$ with a family
$\mathcal S$ of candidate sets, a universe $U$, and incidences
$E\subseteq\mathcal S\times U$.
A subfamily $X\subseteq\mathcal S$ is a \emph{cover} if every element of $U$
has a neighbor in $X$. Define
\[
 \OPT(I)=\min\{\,|X|:X\subseteq\mathcal S\text{ covers }U\,\},
\]
with $\OPT(I)=+\infty$ if no cover exists. 
The $k$-\SetCover{} problem asks whether $U$ can be covered by at most
$k$ sets from $\mathcal S$.
For $h\ge k$, the $(k,h)$-gap \SetCover{} problem asks us to distinguish
between the case $\OPT(I)\le k$ and the case $\OPT(I)>h$.
\end{definition}

\begin{definition}[Minimum monotone satisfying assignment at depth two]
An $\MMSA_2$ instance is a monotone CNF $C$ on $n$ Boolean variables:
an AND of clauses, each of which is an OR of unnegated Boolean variables. For a set $X$ of
Boolean variables, write $C(X)$ for the value of $C$ when precisely the Boolean variables in
$X$ are set to $1$. Define
\[
 \OPT(C)=\min\{\,|X|:C(X)=1\,\},
\]
with $\OPT(C)=+\infty$ if no satisfying assignment exists.
The $k$-$\MMSA_2$ problem asks whether $C$ has a satisfying assignment
with at most $k$ Boolean variables set to $1$.
\end{definition}

\SetCover{} is equivalent to $\MMSA_2$, as pointed out in~\cite{gm97}:
each Boolean variable corresponds to a candidate set, and each clause
corresponds to an element of the universe. This correspondence preserves
the optimum. Thus, in the rest of this paper, we use CNF notation to describe
instances of $k$-\SetCover{}.

\Needspace{12\baselineskip}

\begin{definition}[$k$-\Clique{}]\label{def:clique}
An instance of $k$-\Clique{} is a simple undirected graph $G=(V,E)$.
The goal is to decide whether there exist $k$ distinct vertices
$v_1,\ldots,v_k\in V$ such that $v_iv_j\in E$ for every $i<j$.
\end{definition}

\begin{definition}[Parameterized 2-\CSP{} {\cite{glrsw25}}]\label{def:csp}
An instance of parameterized 2-\CSP{} consists of an undirected constraint graph $G=(V,E)$,
a finite alphabet $\Sigma$, and an allowed relation $R_i\subseteq\Sigma^2$
for each constraint $i\in[m]$, where $m=|E|$.
Fix an ordering $(u_i,v_i)$ of the endpoints of each constraint $i$.
An assignment $a:V\to\Sigma$ satisfies the instance if
\[
 (a(u_i),a(v_i))\in R_i\qquad\text{for every }i\in[m].
\]
We write $k=|V|$ for the number of variables and $n=|\Sigma|$ for the alphabet size. We further assume that the constraint graph is simple and connected.
\end{definition}

\begin{lemma}\label{lem:csp-hardness}
Parameterized 2-\CSP{} is \wone{}-hard. Furthermore, it has no
$f(k)n^{o(k/\log k)}$-time algorithm for any computable function
$f$ under \hypoeth{}, even when the constraint graph is 3-regular.
\end{lemma}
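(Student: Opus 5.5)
This lemma is essentially a known result, and I would prove it by assembling existing reductions rather than building anything new. The \wone{}-hardness comes directly from $k$-\Clique{}. Given $G=(V_G,E_G)$, take the constraint graph to be the complete graph $K_k$ on variables $1,\ldots,k$ and the alphabet $\Sigma=V_G$. For each pair $i<j$, let the relation be $R_{ij}=\{(a,b): ab\in E_G\}$. Since $G$ is simple, this relation forces $a\ne b$, so a satisfying assignment is exactly a $k$-clique. The constraint graph is simple and connected, the parameter stays $k$, and the alphabet size is $|V_G|$. Because $k$-\Clique{} is \wone{}-hard, parameterized 2-\CSP{} is too.

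For the \hypoeth{} lower bound with a 3-regular constraint graph, I would invoke Marx's theorem (``Can you beat treewidth?''), in the sharpened form of Karthik, Marx, Pilipczuk, and Souza (2024). That form says that, assuming \hypoeth{}, binary CSP on $k$-variable constraint graphs of bounded degree has no $f(k)n^{o(k/\log k)}$-time algorithm. The underlying route is as follows. Start from 3-\SAT{} on $N$ variables with $O(N)$ clauses, using sparsification. Choose a 3-regular expander $H$ on $k$ vertices, which has treewidth $\Theta(k)$. Embed a suitable constraint structure into $H$ with congestion $O(\log k)$, using the fact that expanders admit embeddings of arbitrary bounded-degree graphs with only logarithmic loss. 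Then group the original variables into blocks so that each CSP variable ranges over partial assignments of $O(N\log k/k)$ \SAT{} variables, giving alphabet $n=2^{O(N\log k/k)}$. With this alphabet, an $f(k)n^{o(k/\log k)}$-time algorithm would give a $2^{o(N)}$-time algorithm for 3-\SAT{}, contradicting \hypoeth{}.

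The remaining requirements are bookkeeping. If the cited theorem only guarantees bounded degree, I would reduce to exact 3-regularity by the standard trick: replace each high-degree vertex by a cycle of copies joined by equality constraints, and fill in any missing degree by connecting dummy vertices that carry trivial constraints. Both steps change $k$ by only a constant factor and keep the graph simple and connected. Equality constraints fit the definition because each relation $R_i\subseteq\Sigma^2$ may be arbitrary.

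The main obstacle is the $\log k$ loss in the embedding step. Rather than reproving it, I would cite the Karthik--Marx--Pilipczuk--Souza statement directly, since that is exactly the source the paper intends to build on.
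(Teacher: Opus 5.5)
Your proposal is correct and follows the paper's proof: the same $k$-\Clique{} reduction (constraint graph $K_k$, alphabet $V(G)$, edge relations on each pair) gives \wone{}-hardness, and the \hypoeth{} bound comes from citing Karthik--Marx--Pilipczuk--Souza. The paper notes that their construction already uses 3-regular (hence connected) expanders as constraint graphs, so your fallback regularization step is unnecessary.
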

\begin{proof}
The \wone{}-hardness follows from a reduction from
$k$-\Clique{}.
Given a $k$-\Clique{} instance $G$ with $k\ge2$, introduce $k$ variables, each choosing a vertex from a copy of $V(G)$. Let the alphabet be $\Sigma=V(G)$ and the constraint graph be $K_k$.
For each $i<j$, let
\[
 R_{ij}=\{(u,v)\in\Sigma^2:uv\in E(G)\}.
\]
It's easy to see that a satisfying assignment exists if and only if $G$ contains
a $k$-clique.

The \hypoeth{} lower bound follows from~\cite[Theorem~1.2]{kmps}, whose construction uses 3-regular expanders, which are connected, as constraint graphs.
\end{proof}

\begin{lemma}[Perfect hash families {\cite{ayz}}]\label{lem:hash}
For any nonempty finite set $V$ and $\Sigma=[h]$, there is a family
$\Phi(V,\Sigma)$ of maps $\phi:V\to\Sigma$ such that
\[
 \forall A\subseteq V,\ |A|\le h,
 \qquad \exists\phi\in\Phi(V,\Sigma)\text{ injective on }A.
\]
This family has size $2^{O(h)}\log(2|V|)$ and can be constructed deterministically
in $2^{O(h)}|V|\log(2|V|)$ time.
\end{lemma}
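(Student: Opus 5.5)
The plan is to build $\Phi(V,\Sigma)$ by composing two families: a small family that shrinks $V$ to a universe of size polynomial in $h$ while staying injective on every $h$-set, and a perfect hash family on that small universe. For the first level, identify $V$ with $\{0,\dots,|V|-1\}$ and take the maps $x\mapsto x \bmod p$ for all primes $p\le P$, where $P=\Theta(h^2\log(2|V|))$.

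For a fixed $A$ with $|A|\le h$, the map $x\mapsto x\bmod p$ fails to be injective on $A$ only if $p$ divides $D_A=\prod_{\{a,b\}\subseteq A}|a-b|$. Since $D_A\le |V|^{h^2}$, it has at most $h^2\log_2|V|$ distinct prime factors. By the prime number theorem, the interval $[1,P]$ contains more primes than that, so some prime in it gives an injective map.

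This first level has only $O(h^2\log(2|V|))$ maps, each into $[P]$. That is still too many maps and too large a range. I would therefore apply the same trick once more, reducing $[P]$ to $[q]$ for $q=O(h^2\log(h\log|V|))$. Then I would use an FKS-style step $x\mapsto ((ax)\bmod p')\bmod h^2$, which is injective on $A$ for some $a$ out of $\mathrm{poly}(h)$ choices, to land in $[h^2]$. Each level adds only $\mathrm{poly}(h)$ or $O(\log |V|)$ multiplicative factors to the family size, and $\mathrm{poly}(h)=2^{O(h)}$.

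The second level is a family $\Psi$ of maps $[h^2]\to[h]$ that is perfect for all $h$-subsets of $[h^2]$, of size $2^{O(h)}$. The counting argument shows such a family exists: a random map is injective on a fixed $h$-set with probability $h!/h^h\ge e^{-h}$, and there are at most $\binom{h^2}{h}=2^{O(h\log h)}$ sets, so $e^{h}\cdot O(h\log h)$ random maps suffice.

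Constructing $\Psi$ deterministically in $2^{O(h)}$ time is the step I expect to be the main obstacle. Greedy derandomization over all $h$-subsets costs $2^{O(h\log h)}$, which is too slow. Here I would follow the splitter technique of Alon--Yuster--Zwick and Naor--Schulman--Srinivasan:
\begin{enumerate}
\item first split $[h^2]$ into $\ell=\lceil h/\log h\rceil$ parts by an explicit family of splitters, so that each part receives about $\log h$ elements of $A$;
\item within each part, brute-force a perfect family for sets of size $O(\log h)$, which takes only $2^{O(\log^2 h)}$ time via the method of conditional expectations;
\item concatenate the resulting ranges.
\end{enumerate}
The product of the per-part family sizes and the splitter family size remains $e^{h+o(h)}$.

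Finally, set $\Phi=\{\psi\circ\rho\}$, where $\rho$ ranges over the first-level composed reducers and $\psi$ ranges over $\Psi$. This gives $|\Phi|=2^{O(h)}\log(2|V|)$. Writing each $\phi$ out as a table on $V$ costs $O(|V|)$ per map, which gives the stated $2^{O(h)}|V|\log(2|V|)$ construction time.
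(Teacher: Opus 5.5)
The paper gives no proof of this lemma; it quotes it from Alon--Yuster--Zwick. Your reconstruction has a genuine gap in the size bound. It sits in the first-level reducers, not in $\Psi$, which your splitter argument handles adequately.

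The FKS step $x\mapsto((ax)\bmod p')\bmod h^2$ is only guaranteed to work for \emph{some} $a\in\{1,\dots,p'-1\}$ with $p'>q$. Since you cannot know which $a$ in advance, it contributes $\Theta(q)=\Theta(h^2\log(h\log|V|))$ maps, not $\operatorname{poly}(h)$. Likewise, the second mod-prime stage contributes $\pi(q)$ maps, which also grows with $\log\log|V|$. Even with careful prime counting, where $\pi(P)\pi(q)=O(h^4\log|V|/\log q)$ telescopes, the composed reducers number $\Theta(h^6\log|V|\cdot\log P/\log q)$. For fixed $h$ this is $\Theta(\log|V|\cdot\log\log|V|/\log\log\log|V|)=\omega(\log|V|)$. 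So your construction gives only about $2^{O(h)}\log|V|\cdot\log\log|V|$, the kind of loss typical of FKS/Schmidt--Siegel-style layering, and not the uniform $2^{O(h)}\log(2|V|)$ claimed. Adding more mod-prime levels does not repair this: each level costs its own $\operatorname{poly}(h)$ factor, and shrinking the universe to $\operatorname{poly}(h)$ takes roughly $\log^*|V|$ levels.

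The missing ingredient is a single-stage family of $\operatorname{poly}(h)\log(2|V|)$ maps from $V$ into a range of size $\operatorname{poly}(h)$. It must be injective on every set of size at most $h$ and constructible in $\operatorname{poly}(h)|V|\log(2|V|)$ time. One option is the coordinate maps of an explicit code over an alphabet of size $\operatorname{poly}(h)$, with length $\operatorname{poly}(h)\log(2|V|)$ and relative distance greater than $1-1/\binom h2$. Then some coordinate separates all $\binom h2$ pairs of $A$. Follow this with an FKS step to $[h^2]$, which now costs only $\operatorname{poly}(h)$ maps, and then your $\Psi$; this gives the stated bound.

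For the paper's own use, your weaker bound would suffice, since an extra $\log\log|\Sigma|$ factor is harmless in a polynomial-time reduction. It does not, however, prove the lemma as stated.

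A smaller fix: with $P=\Theta(h^2\log(2|V|))$ one has $\pi(P)=\Theta(P/\log P)$, which is smaller than $h^2\log_2|V|$ once $|V|$ is large. So your prime-number-theorem count fails as written. Argue instead via Chebyshev's bound: $\prod_{p\le P}p=2^{\Omega(P)}$ exceeds $D_A\le|V|^{h^2}$, so some prime $p\le P$ does not divide $D_A$.
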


\section{A Reduction from Parameterized 2-CSP to $k$-\SetCover{}}\label{sec:reduction}

Let $\Gamma$ be an instance of parameterized 2-\CSP{} with constraint
graph $G=(V,E)$, alphabet $\Sigma$, and relations $R_1,\ldots,R_m$. 

\paragraph{The construction.}
Introduce one Boolean variable $x_{i,a,b}$ for each $(a,b)\in R_i$, giving a total of
\[
 N=\sum_{i=1}^m|R_i|
\]
Boolean variables. Setting $x_{i,a,b}=1$ means selecting the label pair $(a,b)$ for the $i$-th constraint.

Fix an integer $h\ge m$.
Take the perfect hash family $\Phi=\Phi(\Sigma,[2h])$ as constructed in \Cref{lem:hash}.
Define a \textit{template} to be a family of relations $H=(H_i)_{i\in[m]}$, where
$H_i\subseteq[2h]^2$ and $\sum_{i=1}^m|H_i|\le h$.
It specifies a 2-\CSP{} on the same constraint graph $G$, with
alphabet $[2h]$ and relation $H_i$ on $(u_i,v_i)$.
The template is \emph{bad} if this 2-\CSP{} is unsatisfiable, i.e., if there is no assignment $z:V(G)\to[2h]$ such that
\[
 (z(u_i),z(v_i))\in H_i\qquad\text{for every }i\in[m].
\]
For each $\phi\in\Phi$ and each bad template $H$, form the clause
\begin{equation}\label{eq:unified-circuit}
 C_{\phi,H}
 =\bigvee_{\substack{i\in[m]\\(a,b)\in R_i\\
 (\phi(a),\phi(b))\notin H_i}}x_{i,a,b},
\end{equation}
which asks for a selected label pair $(a,b)$ such that
$(\phi(a),\phi(b))\notin H_i$.

Let the final formula be
$$
 C=\bigwedge_{\substack{\phi\in\Phi\\H\text{ bad}}}C_{\phi,H}.
$$
In words, for every hash function $\phi\in\Phi$ and every bad template
$H$, we ask for a certificate that the hashed label pairs are \textit{not} contained in $H$.

\paragraph{Completeness.}
Let $\alpha:V(G)\to\Sigma$ be a satisfying assignment of $\Gamma$, and let 
\[
 X_\alpha=\{x_{i,\alpha(u_i),\alpha(v_i)}:i\in[m]\}.
\]
Set the $m$ Boolean variables in $X_\alpha$ to $1$ and all others to $0$.
Fix any $\phi\in\Phi$ and any bad template $H$.
Since $H$ is unsatisfiable, the assignment $z=\phi\circ\alpha$
violates some relation $H_i$. Hence
$(\phi(\alpha(u_i)),\phi(\alpha(v_i)))\notin H_i$, so the variable
$x_{i,\alpha(u_i),\alpha(v_i)}$, which is set to $1$, satisfies
$C_{\phi,H}$. This holds for every $\phi,H$, so $C(X_\alpha)=1$
and $\OPT(C)\le m$.

\paragraph{Soundness.}
Suppose $\Gamma$ is not satisfiable. Let $X$ be a set of size at most $h$ Boolean variables and let $$A=\{a,b:x_{i,a,b}\in X\} \subseteq \Sigma$$
be the collection of selected labels in $X$, we have $|A|\le2h$. Choose
$\phi\in\Phi$ injective on $A$. Define the template $H=(H_i)_{i\in[m]}$ where
\[
 H_i=\{(\phi(a),\phi(b)):x_{i,a,b}\in X\}
 \qquad\text{for each }i\in[m].
\]
Then $\sum_i|H_i|\le|X|\le h$.
If $H$ were satisfiable, let $z:V \to [2h]$ be a satisfying assignment. Since every vertex is incident to a constraint and $z$ satisfies $H$, we have $z(v)\in\phi(A)$ for every $v\in V$. Define $\alpha:V \to \Sigma$ by setting $\alpha(v)=(\phi|_A)^{-1}(z(v))$ for every $v \in V$.

Injectivity guarantees that $\alpha$ is well-defined, and that for every $i \in [m]$, $x_{i,\alpha(u_i),\alpha(v_i)} \in X$. Thus $\alpha$ satisfies $\Gamma$, a contradiction. Therefore, $H$ is bad and $C_{\phi,H}(X)=0$ by our definition of $H$. This holds for every $X$ of size at most $h$, so $\OPT(C)>h$.

\paragraph{Construction time.}
The number of Boolean variables is at most  $$N=\sum_{i=1}^m|R_i|\le m|\Sigma|^2.$$
Let $M=m(2h)^2$,
the number of templates is at most
\[
 \sum_{j=0}^h\binom Mj
 \le(h+1)M^h=2^{O(h\log h)}.
\]
To check whether a template is bad, we enumerate
one pair from each $H_i$ and check whether these pairs agree on shared variables. Since $\prod_i|H_i|\le2^{\sum_i|H_i|}\le2^h$,
this takes $2^{O(h)}$ time per template.

By \Cref{lem:hash}, the perfect hash family $\Phi$ has size
$2^{O(h)}\log(2|\Sigma|)$ and can be constructed in
$2^{O(h)}|\Sigma|\log(2|\Sigma|)$ time. Therefore, the entire circuit has
construction time and size at most
\begin{equation}\label{eq:unified-cost}
 N^{O(1)}2^{O(h\log h)}.
\end{equation}
For $N\ge16$, set
\begin{equation}\label{eq:nearlog-h}
 h=\left\lfloor\frac{\log N}{\log\log N}\right\rfloor.
\end{equation}
Since $h\log h\le\log N$, the construction is in polynomial-time.

Plugging the \wone{}-hardness and the \hypoeth{} lower bound for
parameterized 2-\CSP{} from \Cref{lem:csp-hardness} into the reduction,
we obtain the desired hardness of $k$-\SetCover{}.

\begin{proof}[Proof of \Cref{thm:main}]
Apply the reduction to the instance obtained from $k$-\Clique{} as in the proof of \Cref{lem:csp-hardness}.
Thus the $(m,h)$-gap \SetCover{} problem is \wone{}-hard, where $m=\binom{k}{2}$.
\end{proof}

\begin{proof}[Proof of \Cref{thm:main-eth}]
Apply the reduction to the 3-regular instances in \Cref{lem:csp-hardness}. The target parameter $m=3k/2$ due to the 3-regularity and the number of Boolean variables is $N \le n^{O(1)}$. Thus, an $f(m)N^{o(m/\log m)}$-time algorithm for $(m,h)$-gap \SetCover{}
would solve the source parameterized 2-CSP instances in $g(k)n^{o(k/\log k)}$ time
for some computable function $g$, contradicting \Cref{lem:csp-hardness}.
\end{proof}

\Needspace{8\baselineskip}
\begingroup\small\raggedright

\endgroup
\end{document}